\def\Label#1{}
\let\rho=\varrho
\def\pbarsq{{\bar p_2^{\,2}}}
\def\defcommand#1:#2?{\global\let\myref\myundefined%
	\ifthenelse{\equal{#1}{e}}{\global\let\myref\eref}{}%
	\ifthenelse{\equal{#1}{p}}{\global\let\myref\pref}{}%
	\ifthenelse{\equal{#1}{l}}{\global\let\myref\lref}{}%
	\ifthenelse{\equal{#1}{r}}{\global\let\myref\rref}{}%
	\ifthenelse{\equal{#1}{t}}{\global\let\myref\tref}{}%
	\ifthenelse{\equal{#1}{f}}{\global\let\myref\fref}{}%
	\ifthenelse{\equal{#1}{s}}{\global\let\myref\sref}{}%
	\ifthenelse{\equal{#1}{d}}{\global\let\myref\dref}{}%
}
\let\ssref=\ref
\def\fref#1{Figure~\ssref{#1}}
\def\cref#1{Condition~\ssref{#1}}
\def\Cref#1{Corollary~\ssref{#1}}
\def\eref#1{(\ssref{#1})}
\def\sref#1{\textsection\ssref{#1}}
\def\lref#1{Lemma~\ssref{#1}}
\def\rref#1{Remark~\ssref{#1}}
\def\tref#1{Theorem~\ssref{#1}}
\def\dref#1{Definition~\ssref{#1}}
\def\pref#1{Proposition~\ssref{#1}}
\def\dref#1{Definition~\ssref{#1}}
\def\myundefined#1{% [arxiv_v2: inline-PS \special stripped, 21 chars]{what is #1?}%
	% [arxiv_v2: inline-PS \special stripped, 21 chars]}
\def\ref#1{\defcommand#1?%
	\myref{#1}}
\let\myref\relax

\usepackage{amsmath}
\usepackage{amsfonts}
\usepackage{graphicx}
\usepackage{times}
\usepackage{amsthm}
\usepackage{ amssymb }
\usepackage{color}
\usepackage{mhequ}
\usepackage{dsfont}
\usepackage[margin=2.6cm]{geometry}
\usepackage{color}
\usepackage{url}
\usepackage{lipsum}
\usepackage{subfig}
\usepackage{mhequ}

\def \mEE {\mathbb E}

\usepackage[graphics, active, tightpage]{}

\captionsetup[figure]{margin=2cm,font=footnotesize,labelfont=bf,labelsep=endash,textfont=rm}\captionsetup[subfigure]{margin=0pt}

\def\thecomma{\ifx,\thenewxt \else\ifx;\thenext \else\ifx.\thenext
	\else\ifx!\thenext \else\ifx:\thenext\else\ifx)\thenext \else \
	\fi\fi\fi\fi\fi\fi}
\def\condblank{\futurelet\thenext\thecomma}
\def\ie{{\it i.e.,}\condblank}

\numberwithin{equation}{section}

\newtheorem{theorem}{Theorem}[section]
\newtheorem{lemma}[theorem]{Lemma}
\newtheorem{proposition}[theorem]{Proposition}
\newtheorem{definition}[theorem]{Definition}

\theoremstyle{definition} %remarks in rm
\newtheorem{remark}[theorem]{Remark}

\bibliographystyle{JPE}
\usepackage{cite}

\usepackage{stmaryrd}

\newcommand{\dd}{\mathrm{d}}

\newcommand{\dt}{\,\dd t}

\def\argcdot{{\,\cdot\,}}

%%%%%%%%%%%%%%%%%%%%%%%%%%%%%%%%%%%%%%%%%%%%%%%%%%%%%%%%%%%%%%%%%%%%%%%%%%%%%%
%%%%%%%%%%%% Blackboard bolds
%%%%%%%%%%%%%%%%%%%%%%%%%%%%%%%%%%%%%%%%%%%%%%%%%%%%%%%%%%%%%%%%%%%%%%%%%%%%%%

%%%%%%%%%%%%%%%%%%%%%%%%%%%%%%%%%%%%%%%%%%%%%%%%%%%%%%%%%%%%%%%%%%%%%%%%%%%%%%
%%%%%%%%%%%% Greek letters
%%%%%%%%%%%%%%%%%%%%%%%%%%%%%%%%%%%%%%%%%%%%%%%%%%%%%%%%%%%%%%%%%%%%%%%%%%%%%%
\let\kappa=\varkappa
\let\phi=\varphi

\newcommand{\ind}{\mathbf{1}}

\let\epsilon=\varepsilon

\def\torus{{\mathbb T}}
\def\real{{\mathbb R}}
\def\integer{{\mathbb Z}}

\def\argcdot{{\,\cdot\,}}

\begin{document}

\title{On the relaxation rate of short chains of rotors interacting with Langevin thermostats}
%Maybe On the relaxation rate for short chains of rotors interacting with Langevin thermostats?
\author{N. Cuneo\thanks{Department of Mathematics and Statistics, McGill University, 805 Sherbrooke St.~West, Montreal, QC, H3A 0B9, Canada} ~ and ~C. Poquet\thanks{Univ Lyon, Universit\'e Claude Bernard Lyon 1, CNRS UMR 5208, Institut Camille Jordan, 43 blvd. du 11 novembre 1918, F-69622 Villeurbanne cedex, France}}

\date{} %remove date

\maketitle
\vspace{-0.5cm}
\thispagestyle{empty}
\begin{abstract}
In this short note, we consider a system of two rotors, one of which interacts with a Langevin heat bath. We show that the system relaxes to its invariant measure (steady state) no faster than a stretched exponential  $\exp(-c t^{1/2})$. This indicates that the exponent $1/2$ obtained earlier by the present authors and J.-P. Eckmann for short chains of rotors is optimal.
\end{abstract}

\section{Introduction and main result}

When Hamiltonian chains interact with stochastic heat baths, the rate of relaxation to the steady state (if there is one) is known to depend on the model.  For some chains of oscillators with strong interactions, exponential convergence has been proved in \cite{eckmann_nonequilibrium_1999,eckmann_hairer_2000,reybellet_exponential_2002,carmona_2007}. On the other hand, chains of oscillators where the pinning dominates the interactions can exhibit strictly subgeometric rates, as proved in \cite{hairer_slow_2009}. In fact, the convergence rate can be very sensitive to the parameters of the model, as nicely illustrated in \cite{hairer_how_2009} for a system of two oscillators with two heat baths, one of which has ``infinite temperature''. Other energy-exchange models where subgeometric convergence rates have been observed include \cite{li2016polynomial, li2016polynomial2, MR3217533, MR3071443}. 

It has been shown by the present authors and J.-P. Eckmann
that chains of three and four rotors, interacting with Langevin heat baths at both ends, relax to their invariant measure at least as fast as a stretched exponential $\exp(-c t^{1/2})$ \cite{CEP_nonequilibrium_2014,four_rotors_2015}. We address in this note the question of whether the exponent $1/2$ is optimal\footnote{For chains of length $n$, it is conjectured in \cite[Remark 5.3]{four_rotors_2015} that the exponent is $1/({2\lceil{n/2}\rceil-2})$, which is indeed $1/2$ when $n=3,4$. This conjecture is supported by \cite{CEW2017}, where the rate of energy dissipation in deterministic chains of rotors of arbitrary lengths is studied.}. We consider a chain of two rotors interacting with only one heat bath (see \fref{f:system2rotors}). This simple system can be seen as ``one half'' of the chains of length three and four considered in \cite{CEP_nonequilibrium_2014,four_rotors_2015}, and
we show, using techniques introduced in
\cite{hairer_how_2009}, that the convergence indeed happens no faster than the stretched exponential mentioned above (up to the value of the constant $c$).

\begin{figure}[ht]
\centering
\includegraphics[width=1.6in]{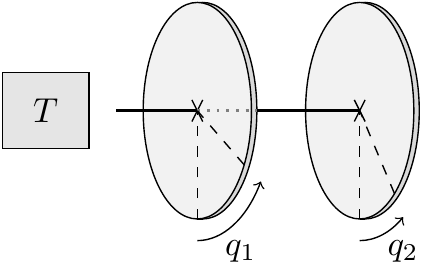}~~
\caption{The model.}
\label{f:system2rotors}
\end{figure}

The model is as follows. Each rotor has a position (angle) $q_i\in \torus =
\real/2\pi
\integer$ and a momentum $p_i \in \real$, $i=1,2$. The phase space is thus $\Omega = \torus^2 \times \real^2$, and we write $x = (q,p) = (q_1, q_2, p_1, p_2)$. 
The first rotor is coupled to a Langevin-type heat bath at temperature $T>0$, with a coupling constant $\gamma > 0$. The second rotor interacts with the first one through a smooth potential $W(q_2-q_1)$, and is not coupled to any heat bath. We apply no other external force to the system, and we take no pinning (on-site) potential. The Hamiltonian is
\begin{equ}
H(q, p) = {\frac {p_1^2 + p_2^2} 2} + W(q_2-q_1)~,
\end{equ}
and we consider the stochastic differential equation
\begin{equs}[eq:sdelowerbound]
	\dd q_i(t)&=p_i(t)\dt~, 
	\qquad\qquad\qquad\qquad\qquad\qquad\qquad 
	 i=1,2~,\\
	\dd p_1(t)&=w\big(q_2(t)-q_1(t)\big)\dt-\gamma
	p_1(t)\dt +\sqrt{2\gamma T}\,\dd B_t~,\\
	\dd p_2(t)&=-w\big(q_2(t)-q_1(t)\big)\dt ~,
\end{equs}
where we have introduced the derivative $w$ of $W$, and where $B_t$ is a standard Wiener process. Without loss of generality, we choose the additive constant in $W$ such that $\int_0^{2\pi} W(s) \dd s = 0$. Since the coefficients of \eref{eq:sdelowerbound} are globally Lipschitz, the process is non-explosive. We denote by $P^t(x, \argcdot)$ the transition probabilities, by $\mEE_x$ the expectation with respect to the process started at $x\in \Omega$, and we introduce the generator
\begin{equs}
L = p_1 \partial_{q_1} + p_2 \partial_{q_2}
+w(q_2-q_1) (\partial_{p_1} - \partial_{p_2}) - \gamma p_1 \partial_{p_1} + \gamma T \partial^2_{p_1}~.
\end{equs}

The physical picture is essentially the same as with three and four rotors (see \cite{CEP_nonequilibrium_2014,four_rotors_2015}): the second rotor decouples when its energy is large, and the crux is to obtain an effective dynamics in this regime.
Since there is only one temperature at hand, the invariant measure $\pi$ is simply the Gibbs measure, \ie
\begin{equs}
	\dd \pi (q, p) = \frac 1 Z e^{-\frac H T} \dd p \dd q~,
\end{equs}
where $Z$ is a normalization constant.

By making minor adaptations to the proof in \cite{CEP_nonequilibrium_2014} (actually, by simply omitting the third rotor), one obtains that the invariant measure $\pi$ is unique, and that there exist a constant $c>0$ and a function $h:\Omega \to \real^+$ such that
\begin{equ}[eq:upperbound]
\|P^{t}(x, \argcdot) - \pi \|_{\mathrm {TV}} \leq h(x)e^{-c \sqrt{t}}
\end{equ}
for all $x\in \Omega$ and $t\geq 0$. Here, $\|\argcdot\|_{\mathrm {TV}}$ denotes the total variation norm, but the result also holds for some stronger norms. For these stronger norms, the constant $c$ and the function $h(x)$ change, but the power of $t$ in the exponential remains $1/2$ (see \cite[Theorem 1.3]{CEP_nonequilibrium_2014}).

We now formulate the main result, which gives a {\em lower} bound on $\|P^{t}(x, \argcdot) - \pi \|_{\mathrm {TV}}$.
\begin{theorem}\label{thm:mainthmlowerbound}
There exist a constant $c_* > 0$ and a function $h_*: \Omega \to \real^+$ such that for each initial condition $x \in \Omega$, there is a sequence $(t_n)_{n\geq 0}$ increasing to infinity such that 
\begin{equ}
\|P^{t_n}(x, \argcdot) - \pi \|_{\mathrm {TV}} \geq h_*(x)e^{-c_* \sqrt{t_n}}~.
\end{equ}
\end{theorem}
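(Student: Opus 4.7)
Following the strategy of \cite{hairer_how_2009}, I would exhibit, for each large $t$, a test event $A_t$ whose probability differs between $P^t(x,\argcdot)$ and $\pi$ by at least $e^{-c_*\sqrt{t}}$. Motivated by the decoupling of the second rotor at high energy, a natural candidate is $A_t = \{|p_2|\geq R_t\}$ with $R_t = \alpha\, t^{1/4}$ for some small $\alpha>0$. Under $\pi$, the $p_2$-marginal is the centred Gaussian of variance $T$, so $\pi(A_t)\leq C e^{-\alpha^2\sqrt{t}/(2T)}$, and the theorem will follow once $P^{t_n}(x,A_{t_n}) - \pi(A_{t_n})\geq h_*(x)\, e^{-c_*\sqrt{t_n}}$ is established along some sequence $t_n\to\infty$.

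The lower bound on $P^{t_n}(x,A_{t_n})$ proceeds in two steps via the strong Markov property. \emph{Step 1 (reaching):} a Stroock--Varadhan/Freidlin--Wentzell-type argument applied to \eref{eq:sdelowerbound} yields, for every $x\in\Omega$, a bounded time $\tau_0$ and a lower bound $h_*(x)\exp(-C_1 R_t^2)$ on the probability of driving the process into $\{|p_2|\geq 2R_t\}$ by time $\tau_0$: one explicitly constructs a control of the Wiener driver $B_t$ that pumps momentum into the second rotor through the coupling $w$, with action of order $R_t^2$. \emph{Step 2 (holding):} starting from any $y$ with $|p_2(y)|\geq 2R_t$, with probability at least $1/2$ one has $|p_2(X_s)|\geq R_t$ for all $s\in[0,t-\tau_0]$, provided $t-\tau_0\leq c\,R_t^4$. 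The naive pathwise bound $|\dot p_2|\leq\|w\|_\infty$ only yields a hold time of order $R_t$, which is far too short; the sharper $R_t^4$ scale comes from the effective dynamics when $|p_2|\asymp R$, where $q_2$ winds at angular velocity $\sim R$ and $w(q_2-q_1)$ becomes an oscillatory integrand of frequency $\sim R$. By a stationary-phase/averaging argument, $\int_0^\tau w(q_2(s)-q_1(s))\,\dd s = O(\sqrt{\tau}/R)$ with high probability, so that $p_2$ behaves as a slow one-dimensional diffusion with diffusivity $\sim R^{-2}$ and typically requires a time of order $R^4$ to change by an amount of order $R$. The matching $R_t\sim t^{1/4}$ is precisely what produces the exponent $1/2$ in the final stretched exponential.

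Combining Steps~1 and 2, for $t_n$ in the correct range one obtains $P^{t_n}(x,A_{t_n})\gtrsim h_*(x)\, e^{-C_1\alpha^2\sqrt{t_n}}$. A little care is required with the constants: since the dynamics is essentially reversible with respect to $\pi$, the Freidlin--Wentzell cost $C_1 R^2$ for reaching $\{|p_2|=R\}$ is comparable to the Gibbs free-energy cost $R^2/(2T)$, so the process and Gibbs masses on $A_t$ are a~priori of the same exponential order. To extract a strict margin one refines the event to a shell $\{R_t\leq|p_2|\leq 2R_t\}$ (or uses an appropriate bounded cutoff $\chi$ and compares $\mEE_x\chi(X_{t_n})$ with $\pi(\chi)$): the Step~2 concentration of the process near $|p_2|\asymp 2R_t$ produces a multiplicative discrepancy between the process and Gibbs probabilities on the shell, which translates into the desired lower bound $h_*(x)\,e^{-c_*\sqrt{t_n}}$ in total variation. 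The main obstacle is thus Step~2, which requires a careful control of the oscillatory average $\int w(q_2-q_1)\,\dd s$ along the fluctuating trajectory of $(q_1,p_1)$; this is essentially the analysis underlying the upper bound in~\cite{CEP_nonequilibrium_2014,four_rotors_2015} executed in reverse.
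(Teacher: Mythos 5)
Your strategy is genuinely different from the paper's. The paper does not attempt a pathwise ``reach and hold'' lower bound on $P^{t}(x,A_t)$. Instead it constructs a Lyapunov-type test function $F$ (built from the averaged variable $\bar p_2$, a cutoff $\rho$, and two exponential scales $\beta_-<1/T<\beta_+$) satisfying two properties: $\pi(F^{1-\varepsilon})=\infty$ for some small $\varepsilon$, and a drift inequality $LF\lesssim F/\log F$, which integrates to $\mEE_x F(x_t)\leq F(x)e^{\sqrt{2At}}$. Then Markov's inequality gives the \emph{upper} bound $P^t(x,\{F>w\})\leq F(x)e^{\sqrt{2At}}/w$, while non-integrability of $F^{1-\varepsilon}$ forces $\pi(\{F>w_n\})\geq w_n^{-(1-\varepsilon/2)}$ along a sequence $w_n\to\infty$. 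Since $w^{-(1-\varepsilon/2)}\gg w^{-1}$, the Gibbs tail dominates and $\pi-\nu_t$ is large on $\{F>w_n\}$ for $t_n$ chosen suitably. In other words, the inequality is in the direction $\pi(A)>\nu_t(A)$, and the only input about the process is the \emph{upper} bound coming from the drift condition.

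The gap in your proposal is precisely here. You aim to lower bound $P^{t_n}(x,A_{t_n})$ and upper bound $\pi(A_{t_n})$, which needs $\nu_{t_n}>\pi$ on $A_{t_n}$; but for this model the inequality almost certainly goes the other way, and moreover your Step~1 cost estimate looks wrong. Since $\dot p_2=-w(q_2-q_1)$ is \emph{bounded}, driving $p_2$ from $O(1)$ to $R$ requires time at least $\sim R/\|w\|_\infty$, during which one must hold $q_2-q_1$ in a favourable window, hence $p_1\approx p_2$ throughout; fighting the friction $\gamma p_1\sim\gamma R$ over a time $\sim R$ gives a Freidlin--Wentzell action of order $R^3$, not $R^2$. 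Thus $\nu_t(A_t)$ is exponentially \emph{smaller} than $\pi(A_t)\sim e^{-R^2/(2T)}$, the claimed margin $\nu-\pi$ has the wrong sign, and the ``shell + multiplicative discrepancy'' fix you sketch would have to produce a sharp \emph{upper} bound on $\nu_t$ of the form $e^{-cR^2}$ with $c>1/(2T)$ --- but your Steps~1 and~2 only produce lower bounds, which are then useless. Getting such an upper bound on the time-$t$ law without something like the drift inequality \eref{eq:borneEFxt} is not clear at all. (Your intuition about $R_t\sim t^{1/4}$ and the $R^4$ averaging time scale is correct and is indeed what drives the construction of $\bar p_2$ in the paper, but the mechanism by which the exponent $1/2$ emerges there is the $1/w$-vs-$w^{-(1-\varepsilon/2)}$ tail comparison, not a reach-and-hold path estimate.)
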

\begin{proof}

In \pref{prop:constructtestfunction} below, we will construct a test function $F: \Omega \to [1, \infty)$ such that $\pi(F^{1-\varepsilon}) = \infty~$ for some $\varepsilon \in (0,1)$, and such that for some $A>0$ and all $x\in \Omega$, 
\begin{equs}[eq:borneEFxt]
\mEE_x F(x_t) \leq  F(x)  e^{\sqrt {2At}}~.
\end{equs}
The desired result then follows from \cite[Theorem 3.6 and Corollary 3.7]{hairer_how_2009}. For completeness, we give here an explicit adaptation of the proof to the present case.

We fix $x \in \Omega$ and write $\nu_t = P^t(x, \argcdot)$.
The result follows from comparing an upper bound on the tail of $\nu_t$ with a lower bound on the tail of $\pi$.\begin{itemize}
\item By \eref{eq:borneEFxt} and Markov's inequality,  we have for all $w > 0$ the upper bound
\begin{equ}[eq:ineqVW]
	\nu_t(F > w) \leq \frac {F(x)e^{\sqrt {2At}}}{w}~.
\end{equ}
\item Since $
(1-\varepsilon) \int_1^\infty \pi(F > w) w^{-\varepsilon} \dd w = \pi(F^{1-\varepsilon}) = \infty$, there is a sequence $(w_n)_{n\geq 0}$ increasing to infinity such that $\pi(F > w_n) w_n^{-\varepsilon} \geq w_n^{-1-\varepsilon/2}$. As a consequence, we have for each $n \geq 0$ the inequality
\begin{equs}[eq:ineqVwn]
\pi(F> w_n) \geq \frac 1 {w_n^{1-\varepsilon/2}}	~.
\end{equs}
\end{itemize}
By \eref{eq:ineqVW}, \eref{eq:ineqVwn} and the definition of the total variation norm, we have for all $n$ that
\begin{equs}
\|\nu_t - \pi \|_{\mathrm {TV}} \geq \pi(F > w_n) - \nu_t(F > w_n) \geq \frac 1 {w_n^{1-\varepsilon/2}} - \frac {F(x)e^{\sqrt {2At}}}{w_n}~.
\end{equs}
Picking now $t_n$ such that $F(x)e^{\sqrt {2At_n}} = \frac 12 w_n^{\varepsilon/2}$, we obtain 
\begin{equs}
\|\nu_{t_n} - \pi \|_{\mathrm {TV}} \geq \frac 1 {2 w_n^{1-\varepsilon/2}}  = h_*(x)e^{-c_*\sqrt {t_n}}~,
\end{equs}
with $c_* = (\frac 2 \varepsilon -1) \sqrt{2A}$ and $h_*(x) = \frac 12 (2F(x))^{1-\frac 2 \varepsilon}$. This completes the proof.
\end{proof}

\section{Construction of the test function}

We now construct a function $F$ that has the properties needed in the proof of \tref{thm:mainthmlowerbound}. This function $F$ will grow fast enough along the $p_2$-axis so that $\pi(F^{1-\varepsilon}) = \infty$ for all small enough $\varepsilon$. Moreover, $F$ will satisfy a relation of the kind $L F \lesssim F / \log F$, which implies \eref{eq:borneEFxt} as we will show.
The test function is basically built by averaging the rapid oscillations of the second rotor (see also \cite{cooke2011geometric} for a related approach in a different setup). 

We start by approximating the dynamics of $p_2$ by an ``averaged'' variable $\bar p_2$ in the regime where $p_2$ is very large.
In this regime, $p_2$ undergoes small and fast oscillations, and as in \cite{CEP_nonequilibrium_2014} and  §2 of \cite{four_rotors_2015}, we will remove step by step the oscillatory terms from its dynamics by adding well-chosen counter-terms to the variable $p_2$.
To obtain the desired result, we will need some control on the dynamics also when 
$p_1$ scales linearly with $p_2$ (see \pref{prop:constructtestfunction}), and thus we cannot simply use an expansion in negative powers of $p_2$ as in \cite{CEP_nonequilibrium_2014,four_rotors_2015}. We instead consider negative powers of $p_2-p_1$, with the following more refined notion of {\em order}  (see also \rref{rem:lowerbound2seulement}).
\begin{definition}
	For any continuous function $f :  \torus^2 \to \real$ and any choice of integers $k, \ell \geq 0$ and $m\in \mathbb Z$, we say that
\begin{equs}
\frac {f(q) p_1^k p_2^m}{(p_2-p_1)^\ell }	
\end{equs}
is {\em of order} $k+m-\ell $. We denote by $\mathcal R(j)$ a generic remainder of order at most $j$, \ie a finite sum of terms of order up to $j$.
\end{definition}

The usual rules apply, in particular $\mathcal R(j) + \mathcal R(k) = \mathcal R(\max(j,k))$, and $\mathcal R(j) \mathcal R(k) = \mathcal R(j+k)$.
The aim now is to introduce a new variable $\bar p_2 = p_2 + \mathcal R(-1)$, which is defined when $p_2 \neq p_1$, and which satisfies
\begin{equs}[eq:aimpbarsqtildelower]
\dd \bar p_2 = \mathcal R(-3) \dd t +  \mathcal R(-2) \dd B_t	~.
\end{equs}
The remainders $\mathcal R(-3)$ and $\mathcal R(-2)$ above need not be computed explicitly. We start with
\begin{equs}[eq:firststep]
\dd p_2 = -w(q_2-q_1) \dt ~.
\end{equs}
We then introduce a first correction
\begin{equs}[eq:firstcorrection]
p_2^{(1)} = p_2 + \frac{W(q_2-q_1)}{p_2-p_1}	~,
\end{equs}
and obtain by Itô's formula
\begin{equs}[eq:resfirststep]
\dd p_2^{(1)} =  \frac {W(q_2-q_1)} {(p_2-p_1)^2}(2w(q_2-q_1) - \gamma p_1) \dt + \mathcal R(-2) \dd B_t+  \mathcal R(-3)\dd t~.
\end{equs}
The counter-term in \eref{eq:firstcorrection} was chosen precisely so as to cancel the right-hand side of \eref{eq:firststep}, up to the higher order terms appearing in \eref{eq:resfirststep} (see also §3.3 of \cite{CEP_nonequilibrium_2014} for more explanations).

Since $\int_0^{2\pi} W(s) \dd s = 0 $, there exists an indefinite integral $W^{[1]}$ of $W$ on $\torus$, which we choose so that $\int_0^{2\pi} W^{[1]}(s) \dd s = 0 $. In turn, we introduce an indefinite integral $W^{[2]}$ of $W^{[1]}$. By construction, we have $\big(W^{[1]}\big)' = W$ and $\big(W^{[2]}\big)' = W^{[1]}$.

We then set
\begin{equs}
 p_2^{(2)}= p_2^{(1)} + \frac{\gamma p_1 W^{[1]}(q_2-q_1) - (W(q_2-q_1))^2}{(p_2-p_1)^3}~,
\end{equs}
and obtain
\begin{equs}
\dd p_2^{(2)} = -  \frac { \gamma^2 p_1 W^{[1]}(q_2-q_1)  } {(p_2-p_1)^3}\dt  - \frac {3 \gamma^2  p_1^2  W^{[1]}(q_2-q_1) } {(p_2-p_1)^4} \dt + \mathcal R(-2) \dd B_t+  \mathcal R(-3)\dd t~.
\end{equs}

We finally obtain \eref{eq:aimpbarsqtildelower} by letting
\begin{equs}
\bar p_2= p_2^{(2)} + \frac { \gamma^2 p_1 W^{[2]}(q_2-q_1)  } {(p_2-p_1)^4} + \frac {3 \gamma^2  p_1^2  W^{[2]}(q_2-q_1) } {(p_2-p_1)^5}~.
\end{equs}

In order to construct the test function $F$, we now introduce some positive parameters $\beta_-, \beta_+$ and $\delta$ satisfying
\begin{equs}[eq:condbetaspm]
\beta_- < \frac 1 T < \beta_+ < \left(1+\frac 1{(1+2\delta)^2}\right)\beta_-	~,
\end{equs}
and consider the partition of $\Omega$ (see \fref{fig:partitionlowerbound}) given by
\begin{itemize}
\item $\Omega_0 = \{x\in \Omega: p_1^2 + p_2^2 < 1\}$~,
\item $\Omega_1 = \{x\in \Omega: |p_2| \leq (1+\delta)|p_1|\}\setminus \Omega_0$~,
\item $\Omega_2 = \{x\in \Omega: (1+\delta)|p_1| < |p_2| \leq  (1+2\delta)|p_1|\}\setminus \Omega_0$~,
\item $\Omega_3 = \{x\in \Omega: |p_2| > (1+2\delta)|p_1|\}\setminus \Omega_0$~.
\end{itemize}

\begin{figure}[ht]
\centering
\includegraphics[width=2.5in, trim = 0 0mm 0mm 0, clip]{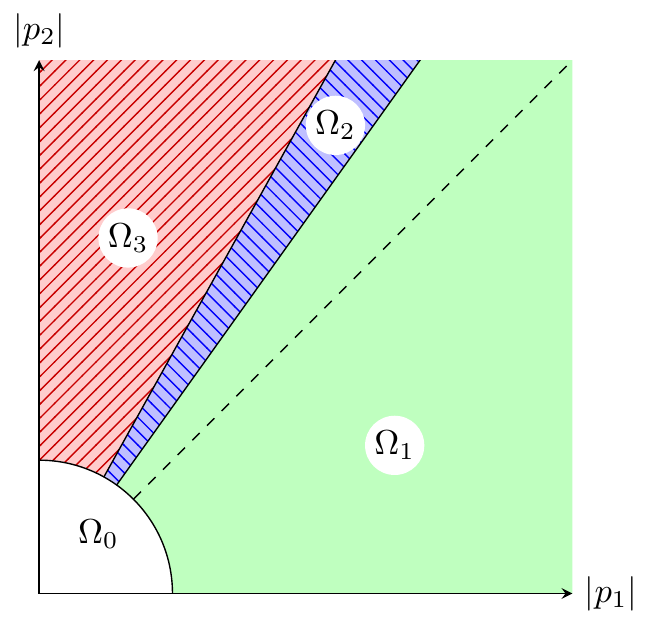}
\caption{Partition of $\Omega$ (in momentum space).}\label{fig:partitionlowerbound}
\end{figure}

We immediately have
\begin{lemma} There are constants $C_1$ and $C_2$ such that on the set $\Omega_2 \cup \Omega_3$, we have
the two inequalities
\begin{equ}\label{eq:lomegasfirstlower}
|\pbarsq - p_2^2| < C_1~,
\end{equ}
\begin{equ}\label{eq:lomegasscndlower}
L e^{\frac {\beta_+ }2  \pbarsq}  \leq C_2 p_2^{-2}e^{\frac {\beta_+ } 2  \pbarsq}~.
\end{equ}

\end{lemma}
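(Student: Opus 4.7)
The plan is to exploit that on $\Omega_2 \cup \Omega_3$ the momenta lie in a favourable regime where the formal order calculus used to construct $\bar p_2$ translates directly into pointwise bounds. My first step is to observe that on this set one has $|p_1| < |p_2|/(1+\delta)$, whence the reverse triangle inequality gives $|p_2 - p_1| \geq \delta |p_2|/(1+\delta)$; combined with $p_1^2 + p_2^2 \geq 1$ (from the exclusion of $\Omega_0$) this also forces $|p_2| \geq c_\delta > 0$. Consequently, every term of the form $f(q) p_1^k p_2^m/(p_2-p_1)^\ell$ of order $j = k + m - \ell$ is bounded on $\Omega_2 \cup \Omega_3$ by $C |p_2|^j$, so any generic remainder $\mathcal R(j)$ satisfies $|\mathcal R(j)| \leq C_j |p_2|^j$ there, for some constant $C_j$.

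Given this, \eref{eq:lomegasfirstlower} follows immediately from the factorisation
\begin{equs}
\pbarsq - p_2^2 = (\bar p_2 - p_2)(\bar p_2 + p_2),
\end{equs}
since $\bar p_2 - p_2 \in \mathcal R(-1)$ by construction and $\bar p_2 + p_2 = 2 p_2 + \mathcal R(-1) \in \mathcal R(1)$, so the product is in $\mathcal R(0)$ and thus uniformly bounded on $\Omega_2 \cup \Omega_3$.

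For \eref{eq:lomegasscndlower}, I would apply It\^o's formula twice. Starting from \eref{eq:aimpbarsqtildelower}, one obtains
\begin{equs}
L \pbarsq = 2 \bar p_2 \, L \bar p_2 + 2 \gamma T (\partial_{p_1} \bar p_2)^2.
\end{equs}
Since differentiation in $p_1$ lowers the order by one, $\partial_{p_1} \bar p_2 \in \mathcal R(-2)$, so the two contributions on the right are in $\mathcal R(1) \mathcal R(-3)$ and $\mathcal R(-4)$ respectively, giving $L \pbarsq \in \mathcal R(-2)$. Applying It\^o's formula a second time to $e^{(\beta_+/2) \pbarsq}$ yields
\begin{equs}
L e^{\frac{\beta_+}{2} \pbarsq} = e^{\frac{\beta_+}{2} \pbarsq}\left( \frac{\beta_+}{2} L \pbarsq + \gamma T \frac{\beta_+^2}{4} (\partial_{p_1} \pbarsq)^2 \right).
\end{equs}
Because $\partial_{p_1} \pbarsq = 2 \bar p_2 \, \partial_{p_1} \bar p_2 \in \mathcal R(-1)$, the bracket is again in $\mathcal R(-2)$, and by the first paragraph this is bounded on $\Omega_2 \cup \Omega_3$ by $C_2 p_2^{-2}$, which proves the claim.

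I do not expect any substantial obstacle: the heavy work has been absorbed into the construction of $\bar p_2$, which removes the oscillatory drift down to order $-3$. The only point requiring some care is the verification that abstract bounds in terms of orders translate into numerical bounds by the corresponding powers of $|p_2|$ on $\Omega_2 \cup \Omega_3$, and this follows, as noted above, from the comparability of $|p_2|$, $|p_1|$ and $|p_2 - p_1|$ on this set.
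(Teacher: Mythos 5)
Your proof is correct and follows essentially the same approach as the paper: establish pointwise comparability of $|p_2|$, $|p_1|$, and $|p_2-p_1|$ on $\Omega_2 \cup \Omega_3$ so that order bounds translate into $|p_2|^j$ bounds, then exploit the remainder structure $\bar p_2 = p_2 + \mathcal R(-1)$ and the SDE \eref{eq:aimpbarsqtildelower}. The only cosmetic differences are that you factor $\pbarsq - p_2^2 = (\bar p_2 - p_2)(\bar p_2 + p_2)$ instead of expanding $(p_2 + \mathcal R(-1))^2$, and that you apply the It\^o chain rule in two stages (through $\pbarsq$) rather than directly to $f(\bar p_2)$ with $f(s) = e^{\frac{\beta_+}{2}s^2}$; both lead to the same $\mathcal R(-2) e^{\frac{\beta_+}{2}\pbarsq}$ bound.
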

\begin{proof} Observe that for all  $k, \ell \geq 0$ and $m\in \mathbb Z$, there is a constant $C$ such that on the set $\Omega_2 \cup \Omega_3 = \{x\in \Omega: |p_2| > (1+\delta)|p_1|,~ p_1^2 + p_2^2 \geq 1\}$, we have
\begin{equs}
\left|\frac {p_1^kp_2^m }{(p_2-p_1)^\ell }	\right| \leq C |p_2|^{k+m-\ell}~.
\end{equs}
This implies that any remainder $\mathcal R(j)$ is bounded in absolute value by some constant times $|p_2|^j$ on $\Omega_2 \cup \Omega_3$.
In particular, since $\pbarsq = (p_2 + \mathcal R(-1))^2 = p_2^2 + \mathcal R(0)$, we obtain that \eref{eq:lomegasfirstlower} holds on $\Omega_2 \cup \Omega_3$ for some
appropriate $C_1$.

In order to prove \eref{eq:lomegasscndlower}, we write $f(s) = e^{\frac {\beta_+} 2  s^2}$ and obtain by Itô's formula
\begin{equs}
\dd \big(e^{\frac {\beta_+} 2  \pbarsq}\big) & = \dd f(\bar p_2)
= f'(\bar p_2) ( \mathcal R(-3)\dt + \mathcal R(-2)\dd B_t) + \frac 12 f''(\bar p_2)\mathcal R(-4)  \dt~.
\end{equs}
We thus find, since $f'(\bar p_2) = \mathcal R(1)e^{\frac {\beta_+} 2  \pbarsq}$ and $f''(\bar p_2) = \mathcal R(2)e^{\frac {\beta_+} 2  \pbarsq}$, that
\begin{equs}
	L e^{\frac {\beta_+} 2  \pbarsq} & =  f'(\bar p_2) \mathcal R(-3)  + \frac 12 f''(\bar p_2)\mathcal R(-4)    = \mathcal R(-2)e^{\frac {\beta_+} 2  \pbarsq}~.
\end{equs}
Now, on the set $\Omega_2 \cup \Omega_3$, the $\mathcal R(-2)$ above is bounded by $C_2 p_2^{-2}$ for some $C_2>0$, and thus \eref{eq:lomegasscndlower} holds.
\end{proof}

We next introduce a smooth cutoff function $\rho:\real^2 \to [0,1]$ such that $\rho(p) =1$ on $\Omega_3$ and $\rho(p) = 0$ on $\Omega_1$, with some transition on $\Omega_2$. More precisely, let $\chi:[0, \infty] \to \real$ be a smooth function such that
 $\chi(s) = 1$ when $s \geq 1+2\delta$, and $\chi(s) = 0$ when $s \leq 1+\delta$. On $\Omega \setminus \Omega_0$, we let
\begin{equs}
\rho(p) = \chi \left(\left|\frac {p_2}{p_1}\right|\right)~,
\end{equs}
and we freely choose $\rho$ on $\Omega_0$ so that it is smooth on all of $\Omega$.

We now define the function $F: \Omega \to [1, \infty)$ by
\begin{equs}[eq:deffunctionf]
F(x) = 1 + e^{\beta_- H(x)} + \rho(p)e^{\frac{\beta_+}2 \pbarsq}	~,
\end{equs}
for some $\beta_-, \beta_+, \delta$ satisfying \eref{eq:condbetaspm}. Observe that while $F$ resembles the Lyapunov function $V$ of \cite{CEP_nonequilibrium_2014}, it grows much faster along the $p_2$-axis.

\begin{proposition}\label{prop:constructtestfunction}
Let $F$ be as defined in \eref{eq:deffunctionf}. Then, $\pi(F^{1-\varepsilon})=\infty$ for small enough $\varepsilon$, and \eref{eq:borneEFxt} holds for large enough $A$.
\end{proposition}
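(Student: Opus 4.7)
My plan is to tackle the two claims in the proposition separately. The first claim, $\pi(F^{1-\varepsilon})=\infty$, will follow from a direct integration of $F^{1-\varepsilon}$ against the Gibbs measure on the region $\Omega_3$. For the second claim, the strategy is to establish the pointwise inequality $LF \leq A F / \log F$ (for $F$ large) and then to convert it into \eref{eq:borneEFxt} via a Gr\"onwall-type argument based on Jensen's inequality.

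For the divergence, I will restrict integration to $\Omega_3$, where $\rho\equiv 1$ gives $F \geq e^{\frac{\beta_+}{2}\pbarsq}$. Using \eref{eq:lomegasfirstlower} to replace $\pbarsq$ by $p_2^2 - C_1$ up to a bounded error, the integrand $F^{1-\varepsilon} e^{-H/T}$ acquires in $p_2$ a factor $\exp\big((1-\varepsilon)\beta_+ p_2^2/2 - p_2^2/(2T)\big)$; by \eref{eq:condbetaspm} the coefficient $(1-\varepsilon)\beta_+ - 1/T$ is strictly positive for all sufficiently small $\varepsilon>0$, so the $p_2$-integral diverges. Since the $p_1$-slab $|p_1| \leq |p_2|/(1+2\delta)$ has width growing linearly in $|p_2|$, the Gaussian $p_1$-integral contributes a uniformly positive constant, and the claim follows.

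For the pointwise bound $LF \leq A F/\log F$, I plan to verify it region by region. On $\Omega_0 \cup \Omega_1$ we have $\rho=0$ and the explicit computation $Le^{\beta_- H} = e^{\beta_- H}\gamma\beta_-(T-(1-\beta_- T)p_1^2)$ shows that on $\Omega_1$ the constraint $|p_2| \leq (1+\delta)|p_1|$ forces $p_1^2 \gtrsim H$, making $LF$ negative once $F$ is large. On $\Omega_3$ I split $LF = Le^{\beta_- H} + Le^{\frac{\beta_+}{2}\pbarsq}$: the second term is controlled by \eref{eq:lomegasscndlower} as $C_2 p_2^{-2} F$, which is $O(F/\log F)$ because $\log F \geq \frac{\beta_+}{2}(p_2^2 - C_1)$; for the first term I split on whether $p_1^2$ exceeds the fixed threshold $T/(1-\beta_- T)$, in which case $Le^{\beta_- H}\leq 0$, or stays below it, in which case $e^{\beta_- H} \leq C e^{\beta_- p_2^2/2}$ is exponentially smaller than $e^{\frac{\beta_+}{2}\pbarsq}\lesssim F$ (using $\beta_+ > \beta_-$), so the ratio $Le^{\beta_- H}/F$ is exponentially small in $\log F$. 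On the transition region $\Omega_2$, the derivatives of the cutoff $\rho$ introduce correction terms proportional to $e^{\frac{\beta_+}{2}\pbarsq}$; here the assumption \eref{eq:condbetaspm}, namely $\beta_-(1+1/(1+2\delta)^2)>\beta_+$, yields $\beta_- H - \frac{\beta_+}{2}\pbarsq \gtrsim p_2^2$, so $e^{\frac{\beta_+}{2}\pbarsq}$ is exponentially dominated by $e^{\beta_- H}\leq F$ and every correction is of much smaller order than $F/\log F$.

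With the pointwise bound secured, the expectation bound follows quickly. Letting $u(t) = \mEE_x F(x_t)$, an It\^o-type estimate (after localization) yields $u'(t) \leq A\,\mEE_x\big[F(x_t)/\log F(x_t)\big]$, and since $x\mapsto x/\log x$ is concave for large $x$, Jensen's inequality upgrades this to $u'(t) \leq A u(t)/\log u(t)$ once $u(t)$ is large. Setting $v = \log u$ gives $v v' \leq A$, hence $v(t)^2 \leq v(0)^2 + 2At$, and $\sqrt{v(0)^2+2At}\leq v(0)+\sqrt{2At}$ delivers $u(t)\leq F(x) e^{\sqrt{2At}}$; the small-$F$ regime is absorbed by enlarging $A$ to accommodate a bounded additive error. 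I expect the hardest step to be the pointwise bound on $\Omega_3$, since $Le^{\beta_- H}$ is not intrinsically of order $F/\log F$ and requires both the case split on $p_1^2$ and the full strength of \eref{eq:condbetaspm} to compare the two exponentials building $F$.
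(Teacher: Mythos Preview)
Your proposal is correct and follows essentially the same approach as the paper: prove a pointwise bound $LF \lesssim F/\log F$ using \eref{eq:lomegasscndlower} on $\Omega_3$, the inequality \eref{eq:condbetaspm} on the transition region $\Omega_2$, and a case split on the size of $p_1^2$ for $Le^{\beta_- H}$, then upgrade to \eref{eq:borneEFxt} via Jensen's inequality and comparison with the ODE $y'=\phi(y)$. The paper organizes the pointwise estimate term-by-term (using the negative piece $-e^{\beta_- H}$ of $Le^{\beta_- H}$ to absorb the $\Omega_2$ errors) rather than region-by-region, and works with $\phi(s)=As/(2+\log s)$, which is concave on all of $[1,\infty)$ and hence sidesteps your small-$F$ caveat; these are cosmetic differences.
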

\begin{proof}
Let $\Gamma  = \{x: |p_1| \leq 1\} \cap \Omega_3$. Using \eref{eq:lomegasfirstlower} and the definition of $H$, we find
\begin{equs}[eq:nonintegrabilitylower]
	\pi(F^{1-\varepsilon}) &\geq \int_\Gamma   \exp\left({\frac{\beta_+(1-\varepsilon) \pbarsq}2}\right)	\frac{\exp\left({- \frac H T}\right)}{Z}\dd p \dd q \\
& \geq c_0 \int_\Gamma  \exp\left({\frac{\beta_+(1-\varepsilon) p_2^2}2}- \frac {p_2^2} {2T}\right)\dd p \dd q
\end{equs}
for some $c_0 > 0$.
Provided that we pick $\varepsilon$ small enough so that $\frac 1T < \beta_+ (1-\varepsilon)$, which is possible by \eref{eq:condbetaspm}, the last integral in \eref{eq:nonintegrabilitylower} is infinite, and thus $\pi(F^{1-\varepsilon}) = \infty$.

We now prove the second assertion. As in \cite{CEP_nonequilibrium_2014}, we introduce the concave and increasing function $\phi: [1, \infty) \to (0, \infty)$ defined by
\begin{equs}
\phi(s) = \frac{As}{ 2+\log s}
\end{equs}
for some $A>0$. We will show that if $A$ is large enough,
\begin{equs}[eq:ineqLF]
	LF \leq \phi(F)~,
\end{equs}
which implies the desired result. Indeed, assume that \eref{eq:ineqLF} holds. Let $(C_n)_{n\geq 0}$ be an increasing sequence of compact sets such that $C_n\uparrow\Omega$, and consider the corresponding first exit times $\tau_n = \inf\{ t\geq 0: x_t \notin C_n\}$. We have $\tau_n \to \infty$ almost surely, since the process is non-explosive. By Dynkin's formula, we find
\begin{equs}
\mEE_x F(x_{t\wedge \tau_n}) - F(x)& = \mEE_x \int_0^{t\wedge \tau_n} 	LF(x_s) \dd s \leq \mEE_x \int_0^{t\wedge \tau_n} 	\phi(F(x_s)) \dd s \\
& \leq \mEE_x \int_0^{t} 	\phi(F(x_{s \wedge \tau_n})) \dd s~ \leq \int_0^{t} 	\phi(\mEE_x  F(x_{s \wedge \tau_n})) \dd s~,
\end{equs}
where the last inequality comes from Fubini's theorem and Jensen's inequality (since $\phi$ is concave). In other words, $g(t) \equiv \mEE_x F(x_{t\wedge \tau_n})$ satisfies the integral inequality $g(t) \leq g(0) + \int_0^t \phi(g(s)) \dd s $.
The solution of the ordinary differential equation $y'(t) = \phi(y(t))$ with $y(0) = y_0 \geq 1$ is 
\begin{equ}[eq:solequadiff]
y(t) = \exp\big({\sqrt{(\ln(y_0)+2)^2 + 2At } }-2\big) \leq y_0 e^{ \sqrt{2At}}~,
\end{equ}
where we have used that $\sqrt{\argcdot}$ is subadditive. By comparison, we thus obtain that $\mEE_x F(x_{t\wedge \tau_n}) \leq F(x)  \exp(\sqrt {2At})$. Taking the limit $n\to \infty$ and using Fatou's lemma gives \eref{eq:borneEFxt}.

Thus, it only remains to prove \eref{eq:ineqLF}. Below, the constants $c_1, c_2, \dots$ are positive, and may depend on the parameters at hand, but {\em not} on the point in $\Omega$.

First, observe that there is a polynomial $z(p_1, p_2)$ such that
\begin{equs}[eq:Lrhoebetaplus]
L \big(\rho(p)e^{\frac{\beta_+}2 \pbarsq}\big)	&\leq c_1+ \ind_{\Omega_3}Le^{\frac{\beta_+}2 \pbarsq} + \ind_{\Omega_2}z(p)e^{\frac{\beta_+}2 \pbarsq}\\
& \leq  c_1 + \ind_{\Omega_3}c_2p_2^{-2}e^{\frac{\beta_+}2 \pbarsq} + \ind_{\Omega_2}z(p)e^{\frac{\beta_+}2 \pbarsq}~,
\end{equs}
where we have used \eref{eq:lomegasscndlower}.

Moreover, since $\beta_- < 1/T$,
\begin{equs}
Le^{\beta_- H}  &= \left((\beta_- T-1) p_1^{2} +
 T\right)\gamma \beta_-e^{\beta_- H} \leq (c_3- c_4p_1^{2})e^{\beta_- H}~.
\end{equs}
Introducing the set
\begin{equs}
G= \{x\in \Omega: p_1^2 < (1+c_3)/c_4\}	
\end{equs}
leads to
\begin{equs}[eq:LebetamoinsH]
Le^{\beta_- H}  & \leq c_3\ind_G e^{\beta_- H} - e^{\beta_- H} \leq c_5\ind_G e^{\beta_- \frac{p_2^2}2} - e^{\beta_- H}~.
\end{equs}

Combining \eref{eq:Lrhoebetaplus} and \eref{eq:LebetamoinsH}, we find
\begin{equs}[eq:LVleqfirstlower]
L F	&\leq  c_1 + \ind_{\Omega_3}c_2p_2^{-2}e^{\frac{\beta_+}2 \pbarsq} + \ind_{\Omega_2}z(p)e^{\frac{\beta_+}2 \pbarsq} + c_5\ind_G e^{\beta_- \frac{p_2^2}2} - e^{\beta_- H}~.
\end{equs}

We now make two observations. First, on $\Omega_2$, we have by \eref{eq:lomegasfirstlower}, the definition of $H$, and the definition of $\Omega_2$ that
\begin{equs}
z(p)e^{\frac{\beta_+}2 \pbarsq}e^{-\beta_- H}  \leq c_6z(p) e^{\frac{\beta_+}2 p_2^2 - \frac{\beta_-}{2}(p_1^2 + p_2^2)} \leq c_6 z(p) e^{\frac {p_2^2}{2}\left(\beta_+ - \beta_-\left(1+\frac 1{(1+2\delta)^2}\right)\right)}~.
\end{equs}
By \eref{eq:condbetaspm}, the above goes to zero when $\|p\|\to \infty$ in $\Omega_2$, so we have 
\begin{equ}[eq:indOmega2zedleq]
	\ind_{\Omega_2}z(p)e^{\frac{\beta_+}2 \pbarsq} - e^{\beta_- H} \leq c_7~.
\end{equ}

In a similar way, since $\beta_+ > \beta_-$ and $G \subset \Omega_3 \cup K$ for some compact set $K$ (on which $\exp({\beta_- \frac{p_2^2}2})$ is bounded), we have
\begin{equs}[eq:indGbetaminusleq]
\ind_G e^{\beta_- \frac{p_2^2}2} \leq c_8 + \ind_{\Omega_3}c_9p_2^{-2}e^{\frac{\beta_+}2 \pbarsq}~,
\end{equs}
where we have also used \eref{eq:lomegasfirstlower}.
Combining now \eref{eq:LVleqfirstlower}, \eref{eq:indOmega2zedleq} and \eref{eq:indGbetaminusleq}, we obtain
\begin{equs}
L F	&\leq  c_{10} + \ind_{\Omega_3}c_{11}p_2^{-2}e^{\frac{\beta_+}2 \pbarsq} \leq c_{10} + \ind_{\Omega_3}  \frac{c_{12} e^{\frac{\beta_+}2 \pbarsq}}{2+\log\left(e^{\frac{\beta_+}2 \pbarsq}\right)} ~,
\end{equs}
where the second inequality uses once more \eref{eq:lomegasfirstlower}. Since $\ind_{\Omega_3} \exp({\frac{\beta_+}2 \pbarsq})\leq F$, and since the function $s \mapsto  s/(2+\log s)$ is increasing, we obtain
\begin{equs}
L F	&\leq c_{10} + \frac{c_{12}F}{2+\log F} \leq \frac{c_{13}F}{2+\log F}~,
\end{equs}
where the second inequality holds because $F\geq 1$. Thus, we indeed have \eref{eq:ineqLF} for large enough $A$, which completes the proof.
\end{proof}

\begin{remark}\label{rem:lowerbound2seulement} 
The fact that we have to work with two different constants $\beta_+$ and $\beta_-$ seems to force us to take a ``transition region'' $\Omega_2$ where $p_1$ scales linearly with $p_2$ (all other attempts have resulted in some troublesome terms coming from the cutoffs). Unlike in \cite{CEP_nonequilibrium_2014} and §2 of \cite{four_rotors_2015}, we are therefore not allowed to assume that $p_1$ is small when we compute $\bar p_2$, which forces us to work with negative powers of $(p_2-p_1)$ instead of simply $p_2$. While this causes no trouble here with only two rotors, technical complications arise if we try to generalize the computations above to chains of three or four rotors. For example, terms involving $(p_2-p_1)^{-j}$ and $(p_2-p_3)^{-j}$ have to be combined, and this leads to troublesome error terms. In addition, with three or four rotors and two heat baths at different temperatures, the invariant measure $\pi$ is not known explicitly, and some supplementary work would have to be done to prove that the function $F$ satisfies $\pi(F^{1-\varepsilon}) = \infty$ for some $\varepsilon > 0$. 
\end{remark}

\subsection*{Acknowledgments} This work has been partially supported by the
ERC Advanced Grant ``Bridges'' 290843. The authors would like to thank Jean-Pierre Eckmann for helpful discussions and suggestions. N.~C.~also thanks the University of Geneva, where he was when the results in this paper were obtained.
\bibliography{refs}

\begin{thebibliography}{10}

\bibitem{carmona_2007}
P.~Carmona.
\newblock Existence and uniqueness of an invariant measure for a chain of
  oscillators in contact with two heat baths.
\newblock {\em Stochastic Process. Appl.\/} {\bf 117} (2007), 1076--1092.

\bibitem{cooke2011geometric}
B.~Cooke, D.~P. Herzog, J.~C. Mattingly, S.~A. McKinley, and S.~C. Schmidler.
\newblock Geometric ergodicity of two--dimensional hamiltonian systems with a
  lennard--jones--like repulsive potential.
\newblock {\em {arXiv} preprint\/}  (2011), arXiv:1104.3842.

\bibitem{four_rotors_2015}
N.~Cuneo and J.-P. Eckmann.
\newblock Non-equilibrium steady states for chains of four rotors.
\newblock {\em Commun. Math. Phys.\/} {\bf 345} (2016), 185--221.

\bibitem{CEP_nonequilibrium_2014}
N.~Cuneo, J.-P. Eckmann, and C.~Poquet.
\newblock Non-equilibrium steady state and subgeometric ergodicity for a chain
  of three coupled rotors.
\newblock {\em Nonlinearity\/} {\bf 28} (2015), 2397–2421.

\bibitem{CEW2017}
N.~Cuneo, J.-P. Eckmann, and C.~E. Wayne.
\newblock Energy dissipation in {H}amiltonian chains of rotators.
\newblock {\em {arXiv} preprint\/}  (2017), arXiv:1702.06464.

\bibitem{eckmann_hairer_2000}
J.-P. Eckmann and M.~Hairer.
\newblock Non-equilibrium statistical mechanics of strongly anharmonic chains
  of oscillators.
\newblock {\em Commun. Math. Phys.\/} {\bf 212} (2000), 105--164.

\bibitem{eckmann_nonequilibrium_1999}
J.-P. Eckmann, C.-A. Pillet, and L.~Rey-Bellet.
\newblock Non-equilibrium statistical mechanics of anharmonic chains coupled to
  two heat baths at different temperatures.
\newblock {\em Commun. Math. Phys.\/} {\bf 201} (1999), 657--697.

\bibitem{hairer_how_2009}
M.~Hairer.
\newblock How hot can a heat bath get?
\newblock {\em Commun. Math. Phys.\/} {\bf 292} (2009), 131--177.

\bibitem{hairer_slow_2009}
M.~Hairer and J.~C. Mattingly.
\newblock Slow energy dissipation in anharmonic oscillator chains.
\newblock {\em Comm. Pure Appl. Math.\/} {\bf 62} (2009), 999--1032.

\bibitem{li2016polynomial2}
Y.~Li.
\newblock On the polynomial convergence rate to nonequilibrium steady-states.
\newblock {\em {arXiv} preprint\/}  (2016), arXiv:1607.08492.

\bibitem{li2016polynomial}
Y.~Li and L.-S. Young.
\newblock Polynomial convergence to equilibrium for a system of interacting
  particles.
\newblock {\em {arXiv} preprint\/}  (2016), arXiv:1601.00717.

\bibitem{reybellet_exponential_2002}
L.~Rey-Bellet and L.~E. Thomas.
\newblock Exponential convergence to non-equilibrium stationary states in
  classical statistical mechanics.
\newblock {\em Commun. Math. Phys.\/} {\bf 225} (2002), 305--329.

\bibitem{MR3071443}
T.~Yarmola.
\newblock Sub-exponential mixing of random billiards driven by thermostats.
\newblock {\em Nonlinearity\/} {\bf 26} (2013), 1825--1837.

\bibitem{MR3217533}
T.~Yarmola.
\newblock Sub-exponential mixing of open systems with particle-disk
  interactions.
\newblock {\em J. Stat. Phys.\/} {\bf 156} (2014), 473--492.

\end{thebibliography}

\end{document}